\newcommand\footnoteref[1]{\protected@xdef\@thefnmark{\ref{#1}}\@footnotemark}
\newtheorem{lemma}{Lemma} 
\newtheorem{theorem}[lemma]{Theorem}
\newtheorem{corollary}[lemma]{Corollary}
\newtheorem{conjecture}[lemma]{Conjecture}
\newtheorem{proposition}[lemma]{Proposition} 
\newtheorem{definition}[lemma]{Definition}
\newtheorem{remark}[lemma]{Remark}
\DeclareMathOperator*{\argmin}{arg\,min}
\newcommand{\Rb}{\mathbb{R}}
\newcommand{\Kcal}{\mathcal{K}}
\newcommand{\Scal}{\mathcal{S}}
\newcommand{\Ycal}{\mathcal{Y}}
\newcommand{\Zcal}{\mathcal{Z}}
\newcommand{\SK}[3]{S_{\rightarrow}\!\left({#1};{#2}\!\left|{#3} \right. \right)} 
\newcommand{\SKK}[3]{S_{\leftrightarrow}\!\left({#1};{#2}\!\left|{#3} \right. \right)} 
\definecolor{lnkcolr}{rgb}{0.89, 0.08, 0.17} 
\definecolor{urlcolr}{rgb}{0.0, 0.35, 0.26}
\begin{document}
\title{Unique Information and Secret Key Decompositions}

\author{Johannes Rauh$^{\ast}$, Pradeep Kr. Banerjee$^{\ast}$, Eckehard Olbrich and J\"urgen Jost\\
	\IEEEauthorblockA{Max Planck Institute for Mathematics in the Sciences, Leipzig, Germany\\
	Email: \tt\small\{jrauh,pradeep,olbrich,jjost\}@mis.mpg.de}
	\thanks{$^{\ast}$The first two authors contributed equally to this work.}
	}

\maketitle

\thispagestyle{plain}          
\pagestyle{plain}              

\begin{abstract}
The \emph{unique information} ($UI$) is an information measure that quantifies a deviation from the Blackwell order. We have recently shown that this quantity is an upper bound on the \emph{one-way secret key rate}. In this paper, we prove a triangle inequality for the~$UI$, which implies that the $UI$ is never greater than one of the best known upper bounds on the \emph{two-way secret key rate}. We conjecture that the~$UI$ lower bounds the two-way rate and discuss implications of the conjecture.
\end{abstract}

\begin{IEEEkeywords}         
	Unique information, secret key rate, Blackwell order, less noisy order. 
\end{IEEEkeywords}
{\hypersetup{linkcolor=black}
 \tableofcontents
}

\section{Introduction}
We consider the well-known \emph{source model} for secret key agreement \cite{maurer1993,maurerintrinsic}: Alice, Bob and an adversary Eve observe~$n$ i.i.d. copies of random variables~$S$, $Y$ and~$Z$ resp., where~$(S,Y,Z)$ is distributed according to some joint distribution known to all parties. Alice and Bob wish to agree on a common secret key, by publicly communicating messages over an authenticated and noiseless channel transparent to Eve. 

A \emph{two-way} public communication protocol proceeds in rounds, where Alice and Bob exchange messages in alternating order, with Alice sending messages in the odd rounds and Bob in the even rounds. Each message is a function of the sender's observation and all the messages exchanged so far. At the end of the protocol, Alice (resp., Bob) computes a key~$K$ (resp.,~$K'$) as a function of~$S^n$ (resp.,~$Y^n$) and~$C$, the set of all exchanged messages.
\begin{definition}[$\!\!\!$\cite{maurer1993}]
The \emph{two-way secret key rate}, denoted $\SKK{S}{Y}{Z}$, is the maximum rate~$R$ such that for every~$\epsilon > 0$, and for sufficiently large~$n$, there exists a public communication protocol such that~$K$ and~$K'$ (ranging over some common set~$\Kcal$) agree with probability at least~$1-\epsilon$, satisfying
\begin{align}\label{eq:SKdefinition}
\tfrac{1}{n}H(K) > \tfrac{1}{n}\log|\Kcal|-\epsilon, \quad  \tfrac{1}{n}I(K;C,Z^n)\le \epsilon, 
\end{align}
and achieving~$\tfrac{1}{n}H(K)\ge R-\epsilon$.
\end{definition}
\eqref{eq:SKdefinition} ensures that the key is almost uniformly distributed and that the \emph{rate} at which Eve learns information about the key is negligibly small. 
A still stronger definition requires that Eve's \emph{total} information about the key is negligibly small. 
For the source model, both these definitions give the same secret key rates~\cite{maurerstrong}. We refer~\cite[Section 17.3]{ckbook} for a review.

The protocol is \emph{one-way} if there is only one round of communication from Alice to Bob. The corresponding key rate is called the \emph{one-way secret key rate} $\SK{S}{Y}{Z}$. 
The one-way secret key rate is a lower bound on the two-way secret key rate. The former 
can be expressed as an optimization problem over Markov kernels of bounded size~\cite{csiszar1978broadcast,ahlswede1993}. 
In contrast, no algorithm to compute the two-way key rate is known, and its value is known only for a handful of distributions~\cite{ahlswede1993,gohari1,goharicomments,chitambarSK}.
Computing the two-way rate is a fundamental and open area of inquiry in information-theoretic cryptography.

The state-of-the-art upper bounds on the two-way key rate rely on the following key observation~\cite{gohari1,gohari3}:
Let $s=\SKK{S}{Y}{Z}$. Imagine a fourth party Charlie who observes i.i.d. copies of a correlated random variable~$Z'$. If we decompose~$s$ into two parts: a part~$s_1$ which Charlie does not know, and a part~$s_2=s-s_1$ which Charlie knows about the secret key shared between $S$ and $Y$ w.r.t.~$Z$, then $s_1$ is at most~$\SKK{S}{Y}{Z'}$, while~$s_2$ is at most~$\SK{SY}{Z'}{Z}$. 
Thus, for any $(S,Y,Z,Z')$, the secret key rate satisfies the following property~\cite[Theorem 4]{gohari1}.
\begin{align}\label{eq:SKdecomposition}
\SKK{S}{Y}{Z}&\leq \SKK{S}{Y}{Z'}+\SK{SY}{Z'}{Z}.
\end{align}

For any~$(S,Y,Z,Z')\sim P$, if the induced channel~$P_{Z|SY}$ dominates the channel~$P_{Z'|SY}$ in the \emph{less noisy} sense~\cite{KoernerMarton75:Comparison_of_noisy_channels}, then the second term~$\SK{SY}{Z'}{Z}$ vanishes. 
One can thus interpret the second term in~\eqref{eq:SKdecomposition} as quantifying a deviation from the less noisy order when we replace~$P_{Z|SY}$ with~$P_{Z'|SY}$~\cite{gohariachieving}.

The secret key rates are similar in spirit to an information theoretic quantity $UI$, called \emph{unique information}, first proposed in~\cite{e16042161}. The value $UI(S;Y\backslash Z)$ is interpreted as the \emph{information about $S$ known to $Y$, but unknown to~$Z$}. 
The definition of $UI$ is motivated by the idea that unique information should be \emph{useful}.
In~\cite{e16042161} this is formalized in terms of decision problems: whenever $UI(S;Y\backslash Z)>0$, there is a decision problem in which it is better to know $Y$ than to know~$Z$. 
A second ingredient is the goal to not only measure some aspect of information, but also to define an information decomposition, in the sense of~\cite{WilliamsBeer}, that is,
\begin{equation}
  \label{eq:SI}
  SI(S;Y,Z) = I(S;Y) - UI(S;Y\backslash Z)
\end{equation}
is nonnegative and can be interpreted as the information about $S$ \emph{shared} between~$Y$ and~$Z$, and
\begin{equation}
  \label{eq:CI}
  CI(S;Y,Z) = I(S;Y|Z) - UI(S;Y\backslash Z)
\end{equation}
is nonnegative and can be interpreted as \emph{synergistic} (or \emph{complementary}) information about~$S$. One can thus interpret the unique information as either the mutual information without the shared information, or as the conditional mutual information without the synergistic information.

The key rates can be described in a similar manner as \emph{information common to~$S$ and~$Y$ that is unique w.r.t.~$Z$}.
Also it is clear by definition in which sense positive values of the key rates are useful. Thus it is natural to ask how the two concepts are related. By studying this question we hope to further both the understanding of the secret key rates and the understanding of information decompositions: in fact, the function $UI$ has been criticized amongst other things for vanishing too often. For example, $UI(S;Y\backslash Z)=0$ whenever the marginals $(S,Y)$ and $(S,Z)$ are identically distributed. The two-way key rate can still be positive in such a situation (see e.g., \cite{UIdefAllerton,goharicomments}).

In~\cite{UIdefAllerton}, we have recently shown that $UI$ is an upper bound on the one-way secret key rate. 
We have also shown that neither the one-way nor the two-way key rate directly fits into the information decomposition framework, as it violates a so-called consistency condition, but we presented a simple construction to enforce the consistency condition and nevertheless derive an information decomposition.

In this paper, we prove a triangle inequality for the~$UI$ which implies the following property that resembles~\eqref{eq:SKdecomposition}:
	For any $(S,Y,Z,Z')$, 
	\begin{align}\label{eq:UISKdecomposition}
	UI(S;Y\backslash Z) \le UI(S;Y\backslash Z')+UI(SY;Z' \backslash Z).
	\end{align}	
From \eqref{eq:UISKdecomposition} we conclude that $UI\le B_{1}$, where $B_{1}$ is one of the best known upper bounds on $S_{\leftrightarrow}$. We conjecture that the~$UI$ lower bounds the two-way key rate and discuss implications of the conjecture.

\section{The Unique Information and its Properties}
For some finite state spaces $\Scal,\Ycal,\Zcal$, let $\mathbb{P}_{\Scal\times\Ycal\times\Zcal}$ be the set of all joint distributions of $(S,Y,Z)$. 
Given $P \in \mathbb{P}_{\Scal\times\Ycal\times\Zcal}$, let 
\begin{equation}
\Delta_{P(S,Y,Z)} := \big\{Q \in \mathbb{P}_{\Scal\times\Ycal\times\Zcal}\colon  Q_{SY}(s,y)=P_{SY}(s,y),\;
Q_{SZ}(s,z)=P_{SZ}(s,z)\big\}
\label{eq:delP}
\end{equation}
be the set of joint distributions of $(S,Y,Z)$ that have the same marginals on $(S,Y)$ and $(S,Z)$ as $P$. 
For brevity, we sometimes write~$\Delta_{P(S,Y,Z)}\equiv \Delta_P$.
\cite{e16042161} define the unique information that $Y$ conveys about $S$ w.r.t. $Z$ as 
\begin{align}
UI(S;Y\backslash Z) := \min_{Q \in \Delta_{P(S,Y,Z)}} I_Q(S;Y|Z),
\label{subeq:UIy}
\end{align}
where the subscript~$Q$ in~$I_Q$ denotes the joint distribution on which the mutual information $I$ is computed.
Since~$\Delta_{P}$ is compact and $I_{Q}$ is continuous in~$Q$, the minimum exists. 
$\Delta_{P}$ is a convex polytope of dimension~$|\Scal|(|\Ycal|-1)(|\Zcal|-1)$,
and the optimization problem~\eqref{subeq:UIy} is a convex program~\cite{e16042161}, actually a convex cone program~\cite{DOT2017bivariate}.
An algorithm to compute the $UI$ has been proposed in~\cite{CUIfullver}\footnote{Link to source code is available at \url{https://github.com/infodeco/computeUI}.}.

The function~$UI$ satisfies the following consistency condition, which implies that $SI$ and $CI$ (defined in~\eqref{eq:SI} and~\eqref{eq:CI}) are symmetric in~$Y,Z$~\cite{e16042161}.
\begin{enumerate}[label=\textbf{P.\arabic*}]
	\item \label{lem:consistency} (\textit{Consistency condition}).
	\begin{align} \label{eq:consistency}
	I(S;Y)+UI(S;Z \backslash Y)=I(S;Z)+UI(S;Y \backslash Z).
	\end{align}
\end{enumerate}
$UI$ also satisfies the following intuitive property. 
\begin{enumerate}[resume,label=\textbf{P.\arabic*}]
	\item \label{lem:BP} (\textit{Blackwell property}). For~$(S,Y,Z)\sim P$, write~$Z\succeq_{S} Y$ if there exists a random variable~$Y'$ such that $S-Z-Y'$ is a Markov chain and $P_{SY'}=P_{SY}$. Then $UI(S;Y\backslash Z)$ vanishes if and only if~$Z\succeq_{S} Y$~\cite[Lemma 6]{e16042161}.
\end{enumerate}
Blackwell's theorem~\cite{Blackwell1953,BlackwellISIT} implies that this property is equivalent to the fact that decision problems can be solved using~$Z$ at least as well as with~$Y$.
We call~$\succeq_{S}$ the~\emph{Blackwell order} (also called the degradation order). The~$UI$ then quantifies a deviation from the Blackwell order.

\subsection{Monotonicity properties of the unique information} 
In this section, we review basic properties that the function~$UI$ shares with the two-way secret key rate.  
We first note the following trivial bounds~\cite{e16042161}. 
\begin{align} 
	I(S;Y)-I(S;Z) \le UI(S;Y\backslash Z) \le \min\{I(S;Y),I(S;Y|Z)\}.
\end{align}
These bounds match the trivial bounds on the two-way secret key rate~\cite{maurerintrinsic} (note that $\SKK SYZ$ is symmetric under permutations of $S,Y$, while $UI(S;Y\backslash Z)$ is not):
\begin{align} \label{eq:trivialboundSK} 
	\max\{I(S;Y)-I(S;Z),I(Y;S)-I(Y;Z)\}\le \SKK{S}{Y}{Z}\le \min\{I(S;Y),I(S;Y|Z)\}.
\end{align}
In a secret key agreement task, if either Eve has less information about~$S$ than Bob or, by symmetry, less information about~$Y$ than Alice, then Alice and Bob can exploit this difference to extract a secret key.

In~\cite{UIdefAllerton}, we proved the following properties of the~$UI$.
\begin{enumerate}[resume,label=\textbf{P.\arabic*}]
	\item \label{lem:LOAliceBob} (\textit{Monotonicity under local operations of Alice and Bob}). For all~$(S,S',Y,Z)$ such that~$YZ$--$S$--$S'$ is a Markov chain, $UI(S;Y\backslash Z) \ge UI(S';Y\backslash Z)$ (and likewise for local operations on~$Y$).
	\item \label{lem:PC} (\textit{Monotonicity under public communication by Alice}). For all~$(S,Y,Z)$ and functions~$f$ over the support of $S$, 
	\begin{align*}   
	   UI((S,f(S));(Y,f(S)) \backslash (Z,f(S))) \leq UI(S;Y \backslash Z). 
	\end{align*}
	\item \label{lem:Normalization} (\textit{Normalization}). For a perfect secret bit $P_{SSZ}(0,0|z)=P_{SSZ}(1,1|z)=\tfrac{1}{2}$, $UI(S;S\backslash Z)=1$. 
	\item \label{lem:AD} (\textit{Additivity on tensor products}). For~$n$ i.i.d. copies of $(S,Y,Z)\sim P$, $UI(S^n;Y^n\backslash Z^n) = n\cdot UI(S;Y\backslash Z)$.
    \item \label{thm:AC} (\textit{Asymptotic continuity}). For any~$P,P'\in \mathbb{P}_{\Scal\times\Ycal\times\Zcal}$, and~$\epsilon\in [0,1]$, if~$\|P-P'\|_1= \epsilon$, then
      \begin{align*} 
        {UI}_{P'}(S;Y\backslash Z) - {UI}_{P}(S;Y\backslash Z) \le \zeta(\epsilon) + 5 \epsilon \log\min\{|\Scal|, |\Ycal|\}
      \end{align*}
for some bounded, continuous function $\zeta:[0,1]\to \Rb_+$ such that $\zeta(0)=0$.
\end{enumerate}

\subsection{A triangle inequality for the unique information}  

In this section, we prove the following triangle inequality. 
\begin{proposition}\label{prop:triangleUI}
	For any $(S,Y,Z,Z')$, 
	$
	UI(S;Y\backslash Z) \le UI(S;Y\backslash Z')+UI(S;Z' \backslash Z).
	$
\end{proposition}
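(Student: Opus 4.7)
The plan is to glue optimizers for the two terms on the right--hand side into a single joint distribution on $(S,Y,Z',Z)$ in which $Y$ and $Z$ are conditionally independent given $(S,Z')$, and then to pass from the LHS to the RHS by two chain--rule manipulations that exploit this Markov property.

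First, I would pick $Q_1\in\Delta_{P(S,Y,Z')}$ achieving $I_{Q_1}(S;Y\mid Z')=UI(S;Y\backslash Z')$ and $Q_2\in\Delta_{P(S,Z',Z)}$ achieving $I_{Q_2}(S;Z'\mid Z)=UI(S;Z'\backslash Z)$. By the definitions of these polytopes, both $Q_1$ and $Q_2$ share the $(S,Z')$--marginal $P_{SZ'}$, so they admit a common gluing
\begin{align*}
Q(s,y,z',z)\;:=\;Q_{1}(y\mid s,z')\,Q_{2}(z\mid s,z')\,P_{SZ'}(s,z'),
\end{align*}
(with the usual $0/0=0$ convention on the conditionals). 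By construction, the $(S,Y,Z')$-- and $(S,Z',Z)$--marginals of $Q$ are $Q_1$ and $Q_2$ respectively, and under $Q$ the chain $Y$--$SZ'$--$Z$ is Markov; in particular $I_{Q}(Y;Z\mid S,Z')=0$.

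Next, I observe that the $(S,Y,Z)$--marginal $\tilde Q$ of $Q$ inherits its $(S,Y)$--marginal from $Q_1$ and its $(S,Z)$--marginal from $Q_2$, so $\tilde Q\in\Delta_{P(S,Y,Z)}$ and hence $UI(S;Y\backslash Z)\le I_{Q}(S;Y\mid Z)$. Expanding $I_{Q}(S;(Y,Z')\mid Z)$ in two ways via the chain rule and dropping the nonnegative term $I_{Q}(S;Z'\mid Y,Z)$ gives
\begin{align*}
I_{Q}(S;Y\mid Z)\;\le\; I_{Q}(S;Z'\mid Z)+I_{Q}(S;Y\mid Z,Z').
\end{align*}
An analogous expansion of $I_{Q}(Y;(S,Z)\mid Z')$, using the Markov identity $I_{Q}(Y;Z\mid S,Z')=0$, yields $I_{Q}(S;Y\mid Z,Z')\le I_{Q}(S;Y\mid Z')$. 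Because mutual informations depend only on the relevant marginals, $I_{Q}(S;Z'\mid Z)=I_{Q_2}(S;Z'\mid Z)=UI(S;Z'\backslash Z)$ and $I_{Q}(S;Y\mid Z')=I_{Q_1}(S;Y\mid Z')=UI(S;Y\backslash Z')$, so chaining the inequalities gives the claim.

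The main obstacle is finding the right gluing: the two short chain--rule steps are essentially forced once the conditional independence $Y\perp Z\mid(S,Z')$ is built in, but this construction succeeds only because the $(S,Z')$--marginals of the two optimizers match, which is guaranteed precisely by the defining equality constraints of $\Delta_{P(S,Y,Z')}$ and $\Delta_{P(S,Z',Z)}$. With the gluing in hand, the rest is routine.
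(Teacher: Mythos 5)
Your proof is correct, and it takes a more direct and elementary route than the paper's. Both proofs exploit the same basic geometric idea — glue two joint distributions along the common $(S,Z')$-marginal while imposing the Markov structure $Y$--$SZ'$--$Z$ — but you deploy it differently. The paper glues the $\Delta_{P(S,Z',Z)}$-optimizer $Q^{\ast}$ with the \emph{original} conditional $P(y\mid s,z')$, then introduces an intermediate optimization problem $\min_{Q\in\Delta_{Q^{\ast}(S,Y,Z'Z)}} I(S;YZ'|Z)$ and shows it is sandwiched between the two sides, invoking Properties~\ref{lem:LOAliceBob} (monotonicity under local operations of Alice/Bob) and~\ref{lem:LOEve} (monotonicity under local operations of Eve) as black-box lemmas. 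You instead glue the \emph{two} optimizers $Q_1$ and $Q_2$ directly, so that the resulting single distribution $Q$ already has $(S,Y,Z')$- and $(S,Z',Z)$-marginals realizing $UI(S;Y\backslash Z')$ and $UI(S;Z'\backslash Z)$, and the rest is two chain-rule steps plus the nonnegativity of conditional mutual information. This makes your argument self-contained from the definition~\eqref{subeq:UIy} alone: you effectively inline the needed special cases of~\ref{lem:LOAliceBob} and~\ref{lem:LOEve} rather than citing them. What the paper's route buys is some modularity (the intermediate $\Delta_{Q^{\ast}(S,Y,Z'Z)}$-minimization reuses already-established monotonicity lemmas); what yours buys is transparency and shorter logical dependency — a reader who has only seen the definition of $UI$ and the chain rule can follow it end to end.
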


To prove Proposition~\ref{prop:triangleUI}, we need the following monotonicity property of the function~$UI$ that is proved in the appendix. 
\begin{enumerate}[resume,label=\textbf{P.\arabic*}]
	\item \label{lem:LOEve} (\textit{Monotonicity under local operations of Eve}). For all $(S,Y,Z,Z')$ such that~$SY$--$Z$--$Z'$ is a Markov chain, $UI(S;Y\backslash Z) \le UI(S;Y\backslash Z')$.
\end{enumerate}
For the special case when~$Z'$ is a deterministic function of~$Z$, Property~\ref{lem:LOEve} was shown in~\cite{ISIT_RBOJ14}.

One can gain an intuitive understanding of Proposition~\eqref{prop:triangleUI} by iterating the basic information decomposition idea as follows.  In the presence of a fourth variable $Z'$, we would like to decompose $u := UI(S;Y\backslash Z)$ into two parts: a part $u_1$, which $Z'$ also knows, and the remainder $u_2 = u- u_1$, which $Z'$ does not know. 
Clearly, $u_1$ should be upper bounded by $UI(S;Z'\backslash Z)$ since $Z'$ alone knows what $Z'$ and $Y$ share. Furthermore, $u_2 \le UI(S;Y\backslash Z')$ since what neither $Z$ nor $Z'$ knows is less than what $Z'$ does not know. 
In total this gives a heuristic argument why the triangle inequality should hold. 
\begin{proof} 
	Let~$(S,Y,Z,Z')\sim P$. 
	We use the following notation: For~$A,B\subseteq\{Y,Z,Z'\}$, $\Delta_{P(S,A,B)}$ is the set of all joint distributions of~$(S,A,B)$ that have the same marginals on the pairs~$(S,A)$ and~$(S,B)$ as~$P$.
	
	Let $Q^{\ast} \in \argmin_{Q\in\Delta_{P(S,Z',Z)}}I(S;Z'|Z)$. Extend $Q^{\ast}$ to a distribution of $S,Y,Z',Z$ via
	\begin{align*} 
      Q^{\ast}(s,y,z',z) = Q^{\ast}(s,z',z) P(y|s,z')
      \text{ if~$P(s,z')>0$,}
	\end{align*}
	and $Q^{\ast}(s,y,z',z)=0$ otherwise.
	Then $Q^{\ast}(S,Y,Z') = P(S,Y,Z')$ and $Q^{\ast}(S,Y,Z) \in \Delta_{P(S,Y,Z)}$.
	Thus,
	\begin{align*} 
              \min_{Q\in\Delta_{Q^{\ast}(S,Y,Z'Z)}} I(S;YZ'|Z) 
	&\stackrel{(a)}{=} \min_{Q\in\Delta_{Q^{\ast}(S,Y,Z'Z)}} I(S;Z'|Z) + I(S;Y|Z'Z)
	\\ &\stackrel{(b)}{=} UI(S;Z'\backslash Z)
	+ \min_{Q\in\Delta_{Q^{\ast}(S,Y,Z'Z)}} I_{Q}(S;Y|Z'Z) 
	\\ &\stackrel{(c)}{=} UI(S;Z'\backslash Z) + UI_{Q^{\ast}}(S;Y\backslash Z'Z)
	\\ &\stackrel{(d)}{\le} UI(S;Z'\backslash Z) + UI_{Q^{\ast}}(S;Y\backslash Z')
               = UI(S;Z'\backslash Z) + UI(S;Y\backslash Z').
	\end{align*}
	(a) follows from the chain rule of mutual information. (b) follows since the $(S,Z,Z')$-marginal is fixed in $\Delta_{Q^{\ast}(S,Y,Z'Z)}$ and by definition of~$Q^{\ast}$, $I_{Q^{\ast}}(S;Z'|Z)=UI(S;Z'\backslash Z)$. (c) follows because the second minimization in~(b) defines $UI_{Q^{\ast}}(S;Y\backslash Z'Z)$. Finally, (d) follows from Property~\ref{lem:LOEve}.
	
	Let $Q^{+} \in \argmin_{Q\in\Delta_{Q^{\ast}(S,Y,Z'Z)}} I(S;YZ'|Z)$.
	Then
	\begin{equation*}
	Q^{+}(S,Y,Z) \in\Delta_{Q^{\ast}(S,Y,Z)} = \Delta_{P(S,Y,Z)}.  
	\end{equation*}
	Therefore,
	\begin{align*} 
	\min_{Q\in\Delta_{Q^{\ast}(S,YZ',Z)}} &I(S;YZ'|Z) = I_{Q^{+}}(S;YZ'|Z)
          \ge UI_{Q^{+}}(S;YZ'\backslash Z)
          \ge UI_{Q^{+}}(S;Y\backslash Z)= UI(S;Y\backslash Z),
	\end{align*}
	where in the last step we have used Property~\ref{lem:LOAliceBob} and the fact that~$Q^{+}(S,Y,Z)\in \Delta_{P(S,Y,Z)}$. 
	This completes the proof.
\end{proof}

From Proposition~\ref{prop:triangleUI} and Property~\ref{lem:LOAliceBob} we conclude:
\begin{corollary}\label{corr:UISKdecomposition}
	For any $(S,Y,Z,Z')$,  
	$
	UI(S;Y\backslash Z) \le UI(S;Y\backslash Z')+UI(SY;Z' \backslash Z).
	$
\end{corollary}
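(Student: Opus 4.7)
The plan is to combine Proposition~\ref{prop:triangleUI} with the local-operations monotonicity Property~\ref{lem:LOAliceBob} applied to the ``target'' slot (first argument) of the second $UI$ term. Concretely, the triangle inequality already gives
\[
UI(S;Y\backslash Z) \;\le\; UI(S;Y\backslash Z') + UI(S;Z'\backslash Z),
\]
so all that remains is to upgrade $UI(S;Z'\backslash Z)$ to $UI(SY;Z'\backslash Z)$ by showing
\[
UI(S;Z'\backslash Z) \;\le\; UI(SY;Z'\backslash Z).
\]

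To obtain this inequality I would invoke Property~\ref{lem:LOAliceBob} with the roles suitably relabeled: take $SY$ as the ``source'' variable, $Z'$ as the ``Bob'' variable, $Z$ as the ``Eve'' variable, and $S$ as the coarse-grained version of the source. Since $S$ is a deterministic function of $(S,Y)$, the required Markov chain $Z'Z$--$SY$--$S$ holds trivially. Property~\ref{lem:LOAliceBob} then asserts precisely that coarse-graining the source only decreases the $UI$, giving $UI(SY;Z'\backslash Z) \ge UI(S;Z'\backslash Z)$.

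Chaining these two bounds yields the claimed inequality in one line. I do not expect any real obstacle here: the Markov condition is automatic, and both facts being combined are already in hand from the preceding material, so the entire argument is essentially a one-line composition of Proposition~\ref{prop:triangleUI} with Property~\ref{lem:LOAliceBob}.
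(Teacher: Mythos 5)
Your proof is correct and matches the paper's own (unwritten) argument exactly: the paper derives the corollary precisely by combining Proposition~\ref{prop:triangleUI} with Property~\ref{lem:LOAliceBob}, and your relabeling to get $UI(S;Z'\backslash Z)\le UI(SY;Z'\backslash Z)$ via the trivial Markov chain $Z'Z$--$SY$--$S$ is the intended step.
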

We can interpret Corollary~\ref{corr:UISKdecomposition} like inequality~\eqref{eq:SKdecomposition}: Given $(S,Y,Z,Z')\sim P$, if the induced channel $P_{Z|SY}$ dominates the channel $P_{Z'|SY}$ in the Blackwell sense (see Property~\ref{lem:BP}).
then the second term~$UI(SY;Z' \backslash Z)$ vanishes. One can interpret $UI(SY;Z' \backslash Z)$ as quantifying a deviation from the Blackwell order when we replace~$P_{Z|SY}$ with~$P_{Z'|SY}$.

\section{Bounds on Secret Key Rates}

\subsection{An upper bound on the one-way secret key rate}
\label{sec:one-waySK}

$S_{\to}$ admits the following characterization.  
\begin{theorem}[$\!\!${\cite[Theorem~1]{ahlswede1993}}]\label{thm:skaRate}
	The \emph{one-way secret key rate} $\SK{S}{Y}{Z}$ for the source model is
	\begin{align*}
		\SK{S}{Y}{Z}=\max\limits_{P_{UV|SYZ}} & I(U;Y|V)-I(U;Z|V) 
	\end{align*}
	for random variables~$U$, $V$ of bounded cardinalities $|\mathcal{U}|\le|\Scal|^2$ and $|\mathcal{V}|\le|\Scal|$, such that $V$--$U$--$S$--$YZ$ is a Markov chain.
\end{theorem}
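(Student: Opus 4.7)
The plan is to establish the characterization by the standard two-sided argument: a direct (achievability) part exhibiting a protocol that attains any rate $I(U;Y|V)-I(U;Z|V)$ with $V\text{--}U\text{--}S\text{--}YZ$, and a converse showing that no admissible protocol can exceed this quantity, followed by a Fenchel--Carathéodory argument bounding the cardinalities $|\mathcal{U}|$ and $|\mathcal{V}|$. Since the one-way setting has only a single message $C$ sent from Alice to Bob, the proof is cleaner than in the full two-way case and does not require block-Markov constructions.

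For achievability, fix a kernel $P_{UV|S}$ with $V\text{--}U\text{--}S$ a Markov chain. I would use a two-layer superposition code with Wyner--Ziv style binning. Alice first draws $V^n\sim\prod P_V$, then $U^n\sim\prod P_{U|V}$, and then, for each source block $S^n$, she picks a jointly typical codeword pair $(V^n(m_V),U^n(m_U,m_V))$. Standard covering arguments require roughly $I(V;S)$ and $I(U;S|V)$ bits for these two layers, from which she publicly reveals the appropriate bin indices of size about $I(V;S)-I(V;Y)$ and $I(U;S|V)-I(U;Y|V)$ respectively, so that Bob, using $Y^n$ as side information, recovers $(V^n,U^n)$ with vanishing error probability via joint typicality decoding. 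The key is then obtained by applying privacy amplification (a two-universal hash) to $U^n$ conditioned on $V^n$, whose output rate is determined by the conditional smooth min-entropy of $U^n$ given $(Z^n,V^n,C)$; standard secret key distillation lemmas (e.g.\ the leftover hash lemma) yield a key rate arbitrarily close to $I(U;Y|V)-I(U;Z|V)$ satisfying the uniformity and secrecy conditions of~\eqref{eq:SKdefinition}.

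For the converse, suppose a one-way protocol produces keys $K,K'$ satisfying~\eqref{eq:SKdefinition} at rate $R-\epsilon$. Let $C$ be Alice's single message. Starting from $nR\le H(K)\le I(K;Y^n|C,Z^n)+I(K;Z^n|C)+H(K|Y^n,C)+n\epsilon$ and using Fano's inequality together with the secrecy bound $I(K;C,Z^n)\le n\epsilon$, one arrives at $nR\le I(K;Y^n|C)-I(K;Z^n|C)+o(n)$. The classical single-letterization trick is then to introduce the time-sharing index $T$ uniform on $\{1,\dots,n\}$ and the auxiliaries
\begin{align*}
V := (C, Y^{T-1}, Z_{T+1}^{n}, T),\qquad U := (K,V),
\end{align*}
after using the Csiszár sum identity $\sum_{i}[I(Y^{i-1};Z_i|C,Z_{i+1}^{n})-I(Z_{i+1}^{n};Y_i|C,Y^{i-1})]=0$ to symmetrise the past/future terms; this yields $R\le I(U;Y|V)-I(U;Z|V)+o(1)$. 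It remains to verify the Markov chain $V\text{--}U\text{--}S\text{--}YZ$: the chain $U\text{--}S\text{--}YZ$ fails verbatim because $U$ depends on $(Y^{T-1},Z_{T+1}^{n})$, so the standard fix is to replace $S$ by $S_T$ and to show, using that $C$ is a function of $S^n$ alone and the i.i.d.\ source structure, that conditioning on $S_T$ makes $U$ independent of $(Y_T,Z_T)$. This step is where the proof is the most delicate and is the main obstacle I would focus on.

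Finally, to enforce the cardinality bounds $|\mathcal{U}|\le|\Scal|^2$ and $|\mathcal{V}|\le|\Scal|$, I would apply the Fenchel--Eggleston--Carathéodory theorem twice. First, for each fixed value of $V$, the conditional distribution of $U$ induces a point in a compact subset of $\mathbb{R}^{|\Scal|}$ (fixing $P_{S|V}$ and the two linear functionals $P_{Y|V}(\cdot|v)$ and the value $I(U;Y|V=v)-I(U;Z|V=v)$), so one can reduce $|\mathcal{U}|$ to $|\Scal|+1$ without changing the objective; a second application to the mixture over $V$ preserves $P_S$ and the objective while reducing $|\mathcal{V}|$ to $|\Scal|$. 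Combining the three ingredients gives the theorem as stated; a full proof is in~\cite{ahlswede1993} and \cite[Sec.~17.3]{ckbook}.
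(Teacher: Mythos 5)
This is a cited result ([Theorem~1]{ahlswede1993}); the paper states it without proof, so there is no in-paper argument to compare against. Your sketch is a recognizable outline of the standard Ahlswede--Csisz\'ar proof (see also \cite[Sec.~17.3]{ckbook}): superposition coding plus Wyner--Ziv binning and privacy amplification for achievability, a Csisz\'ar-sum single-letterization with auxiliaries $V=(C,Y^{T-1},Z_{T+1}^n,T)$, $U=(K,V)$ for the converse, and a support-lemma (Fenchel--Eggleston--Carath\'eodory) step for the cardinalities, so the overall route is correct.

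Two small imprecisions are worth flagging. In the converse you write $H(K)\le I(K;Y^n\,|\,C,Z^n)+I(K;Z^n\,|\,C)+H(K\,|\,Y^n,C)+n\epsilon$; the clean decomposition is $H(K)=I(K;C,Z^n)+I(K;Y^n\,|\,C,Z^n)+H(K\,|\,C,Y^n,Z^n)$, after which the secrecy condition $I(K;C,Z^n)\le n\epsilon$ and Fano handle the first and last terms---the term you wrote, $I(K;Z^n\,|\,C)$, omits $I(K;C)$ and does not by itself give the bound. Also, the Markov verification $V$--$U$--$S_T$--$Y_TZ_T$ is, as you say, the delicate point, but it goes through cleanly once one observes that $C$ (hence $K$) is a function of $S^n$ and the source is i.i.d.: given $S_T$, the pair $(Y_T,Z_T)$ is conditionally independent of everything indexed by times $\ne T$, and of $S^n$ itself. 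Finally, the cardinality step as you phrase it---reduce $|\mathcal{U}|$ to $|\Scal|+1$ per value of $V$, then reduce $|\mathcal{V}|$ to $|\Scal|$---would give $|\mathcal{U}|\le|\Scal|(|\Scal|+1)$, not $|\Scal|^2$; the bound in the theorem follows from a slightly different bookkeeping of the preserved functionals (see Csisz\'ar--K\"orner), so you would need to redo that count if you wanted the stated constants rather than just \emph{some} finite bound.
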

The bounds on the cardinalities imply that the optimization domain is a set of stochastic matrices of finite size, which makes it possible to turn this theorem into an algorithm to compute $S_{\rightarrow}$.

Like the~$UI$, $\SK{S}{Y}{Z}$ depends only on the marginal distributions of the pairs~$(S,Y)$ and~$(S,Z)$~\cite{ahlswede1993}. 
Using Properties~\ref{lem:LOAliceBob} -- \ref{thm:AC} and results about protocol monotones \cite{maurerunbreakable,gohari1,christandl,gohari3}, one can show the following:
\begin{theorem}[$\!\!${\cite[Theorem 37]{UIdefAllerton}}] \label{thm:uppbound}
	$UI(S;Y\backslash Z)$ is an upper bound for the one-way secret key rate~$\SK{S}{Y}{Z}$.
\end{theorem}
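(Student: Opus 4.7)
The plan is to show that $UI$ is a \emph{protocol monotone} for the one-way setting: under the four types of steps that make up a one-way key-agreement protocol acting on $n$ i.i.d.\ copies of $(S,Y,Z)$, the quantity $UI(S^n;Y^n\backslash Z^n)$ never increases, while at the end it essentially lower-bounds the key length $\log|\Kcal|$. Dividing by $n$ and taking the appropriate limit yields the theorem.

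First I would use additivity (Property~\ref{lem:AD}) to start from $UI(S^n;Y^n\backslash Z^n)=n\,UI(S;Y\backslash Z)$. A one-way protocol consists of a single message $C=\phi(S^n)$ sent by Alice, followed by Alice computing $K=g_A(S^n)$ and Bob computing $K'=g_B(Y^n,C)$. Applying Property~\ref{lem:PC} (monotonicity under public communication by Alice) with the function $\phi$ lifts the message to all three parties, giving
\[
UI\bigl((S^n,C);(Y^n,C)\backslash(Z^n,C)\bigr)\le n\,UI(S;Y\backslash Z).
\]
Then I would apply Property~\ref{lem:LOAliceBob} twice: once for Alice, replacing $(S^n,C)$ by its function $K$, and once for Bob, replacing $(Y^n,C)$ by its function $K'$. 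Both Markov conditions are automatic because $K$ and $K'$ are deterministic functions of the corresponding sides. This yields the single inequality
\[
UI\bigl(K;K'\,\backslash\,(C,Z^n)\bigr)\le n\,UI(S;Y\backslash Z).
\]

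Next I would lower bound the left-hand side in terms of $R$. The cleanest route is via the surrogate distribution $\widetilde P$ in which $K=K'$ is uniform on $\Kcal$ and independent of $(C,Z^n)$. On $\widetilde P$, Properties~\ref{lem:Normalization} and~\ref{lem:AD} together evaluate the unique information exactly: writing $|\Kcal|=2^k$ (and rounding if necessary), a uniform $k$-bit string independent of Eve is $k$ i.i.d.\ perfect secret bits, so $UI_{\widetilde P}(K;K'\backslash(C,Z^n))=\log|\Kcal|$. The secrecy and agreement conditions in~\eqref{eq:SKdefinition}, combined with Pinsker's inequality, imply that $\|P-\widetilde P\|_1$ can be made arbitrarily small by letting $\epsilon\to 0$; asymptotic continuity (Property~\ref{thm:AC}) then gives
\[
UI_{P}\bigl(K;K'\,\backslash\,(C,Z^n)\bigr)\;\ge\;\log|\Kcal|-\delta_n,
\]
with $\delta_n=o(n)$ provided $\epsilon=\epsilon_n\to 0$ quickly enough compared to $\log|\Kcal|\le n\log|\Scal|$. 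Combining with the previous display and dividing by $n$,
\[
R-\epsilon\;\le\;\tfrac{1}{n}\log|\Kcal|\;\le\;UI(S;Y\backslash Z)+\tfrac{\delta_n}{n},
\]
and sending $\epsilon\to 0$ and $n\to\infty$ finishes the proof.

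The main obstacle is the last step: the approximation error in going from the real protocol distribution to the ideal perfect-key distribution gets multiplied by $\log|\Kcal|$, which grows linearly in $n$. This is exactly what asymptotic continuity (Property~\ref{thm:AC}) is designed to control, and it is the reason the argument needs all of Properties~\ref{lem:LOAliceBob}--\ref{thm:AC} rather than just the monotonicity properties. The standard protocol-monotone machinery from \cite{maurerunbreakable,gohari1,christandl,gohari3} packages this limit carefully; alternatively, one can sidestep~\ref{thm:AC} entirely by invoking the trivial bound $UI(K;K'\backslash(C,Z^n))\ge I(K;K')-I(K;C,Z^n)$, estimating $I(K;K')$ via Fano from $\Pr[K=K']\ge 1-\epsilon$, and bounding $I(K;C,Z^n)\le n\epsilon$ directly, which gives the same conclusion with essentially the same delicate interplay between $\epsilon$ and $\log|\Kcal|$.
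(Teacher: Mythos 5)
Your proposal is essentially the proof the paper is pointing to: the paper does not re-derive the theorem here but explicitly says it follows from Properties~\ref{lem:LOAliceBob}--\ref{thm:AC} together with the standard protocol-monotone machinery, and your chain (additivity $\to$ public communication by Alice $\to$ local operations for Alice then Bob $\to$ normalization/additivity on the ideal key $\to$ asymptotic continuity) is exactly that machinery instantiated. The applications of \ref{lem:PC} and \ref{lem:LOAliceBob} are correct: $K$ is a function of $(S^n,C)$, $K'$ is a function of $(Y^n,C)$, and both Markov conditions are automatic, yielding $UI(K;K'\backslash(C,Z^n))\le n\,UI(S;Y\backslash Z)$.

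One wrinkle you gesture at but don't fully resolve: with the rate-based secrecy condition $\tfrac1n I(K;C,Z^n)\le\epsilon$ in~\eqref{eq:SKdefinition}, Pinsker gives $\|P-\widetilde P\|_1=O(\sqrt{n\epsilon})$, which does \emph{not} vanish for fixed $\epsilon$; plugging this into Property~\ref{thm:AC} yields an error of order $\sqrt{n\epsilon}\cdot n\log|\Scal|$, which is not $o(n)$. The clean fix is the one the paper already licenses: by the weak/strong secrecy equivalence (\cite{maurerstrong}, cited right after the definition) one may assume $I(K;C,Z^n)\le\epsilon$ and $\log|\Kcal|-H(K)\le\epsilon$ in total, so that $\|P-\widetilde P\|_1=O(\sqrt\epsilon)$ uniformly in $n$ and the asymptotic-continuity error divided by $n$ is $O(\sqrt\epsilon)\to 0$. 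Your fallback route through the trivial lower bound $UI(K;K'\backslash(C,Z^n))\ge I(K;K')-I(K;C,Z^n)$ together with Fano does work directly with the weak definition and is the cleaner way to close this last step; either way the argument is sound and coincides with the paper's intent.
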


\subsection{Known upper bounds on the two-way secret key rate}
\label{subsec:two-waySK}
As noted in~\eqref{eq:trivialboundSK}, a trivial upper bound on $\SKK{S}{Y}{Z}$ is $\min\{I(S;Y), I(S;Y|Z)\}$~\cite{maurer1993}. 
An improved upper bound is given by the \emph{intrinsic information}~\cite{maurerintrinsic}.
\begin{align}
	\SKK{S}{Y}{Z}\leq I(S;Y\!\!\downarrow \!Z):= \min_{P_{Z'|Z}}I(S;Y|Z'),\label{eq:B0upperbound}
\end{align}
where~$Z'$ may be assumed to have a range of size at most~$|\Zcal|$~\cite{christandlICMI}.

\cite{RennerW03} noted that the intrinsic information exhibits a property called ``locking'', i.e., it can drop by an arbitrarily large amount on giving away a bit of information to Eve. 
In contrast, the two-way rate satisfies
\begin{align}\label{eq:SKlockingproperty}
	\SKK{S}{Y}{ZU}\ge \SKK{S}{Y}{Z}-H(U)
\end{align}
for jointly distributed random variables~$(S,Y,Z,U)$~\cite[Theorem 3]{RennerW03}, and the conditional mutual information satisfies an analogous property.
The same is true for the~$UI$:
\begin{enumerate}[resume,label=\textbf{P.\arabic*}]
\item \label{lem:UISKlock} (\textit{$UI$ does not lock}). For jointly distributed random variables $(S,Y,Z,U)$, 
  \begin{align}
	UI(S;Y\backslash ZU) \ge UI(S;Y\backslash Z)-H(U). \label{eq:UISKlocking property}
  \end{align}
\end{enumerate}
The proof of Property~\ref{lem:UISKlock} is in the appendix.

\cite{RennerW03} proposed an improved upper bound called the \emph{reduced intrinsic information}, which does not exhibit locking:
\begin{equation*}
  I(S;Y\!\!\downarrow\downarrow \!Z) := \inf_{P_{U|SYZ}} I(S;Y\!\!\downarrow \!ZU)+H(U).
\end{equation*}
Property~\ref{lem:UISKlock} implies that $UI(S;Y\setminus Z) \le I(S;Y\!\!\downarrow\downarrow \!Z)$; a fact that will be generalized later in Thereom~\ref{thm:fullchain}.

The tightest known upper bound on the two-way rate is \cite{gohari1} 
\begin{equation}
  B_{2}(S;Y|Z) := \inf_{p_{Z'|SYZ}}I(S;Y|Z') + \SK{SY}{Z'}{Z}. \label{eq:B2upperbound}
\end{equation}
Unfortunately, $B_2$ cannot be computed explicitly, as no bound on the size of~$Z'$ is known. 

A slightly weaker but computable upper bound is given by the \emph{minimum intrinsic information}~\cite{gohari1}.
\begin{align}
	B_1(S;Y|Z):=\min_{P_{Z'|SYZ}} I(S;Y|Z')+I(SY;Z'|Z),\label{eq:B1upperbound}
\end{align} 
where~$|\mathcal{Z}'|\le |\Scal||\Ycal||\Zcal|$. 

\subsection{Unique information based bounds on the two-way rate and a conjecture} 
\label{subsec:UItwo-waySK}
We now investigate some properties of the function~$UI$ in relation to upper bounds on the two-way rate. 
We first list the following known chain of bounds on the two-way rate. 
\begin{align} \label{eq:sandwichboundSK} 
\!\!\SK{S}{Y}{Z}\le \SKK{S}{Y}{Z} \le B_2(S;Y|Z)\le B_1(S;Y|Z) 
\le I(S;Y\!\!\downarrow\downarrow \!Z) \le I(S;Y\!\!\downarrow \!Z)\le I(S;Y|Z).
\end{align}

Corollary~\ref{corr:UISKdecomposition} implies the following result.
\begin{proposition}\label{prop:UIB1}
  $UI(S;Y\backslash Z) \le B_1(S;Y|Z).$
\end{proposition}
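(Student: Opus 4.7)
The plan is to chain Corollary~\ref{corr:UISKdecomposition} with the trivial upper bound on $UI$ and then take an infimum. The whole argument should be essentially a one-liner.

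First, I would fix an arbitrary auxiliary random variable $Z'$ jointly distributed with $(S,Y,Z)$ via some kernel $P_{Z'|SYZ}$. By Corollary~\ref{corr:UISKdecomposition} applied to the quadruple $(S,Y,Z,Z')$, we have
\begin{equation*}
UI(S;Y\backslash Z) \le UI(S;Y\backslash Z') + UI(SY;Z' \backslash Z).
\end{equation*}
Next I would invoke the trivial upper bound $UI(\cdot;\cdot\backslash\cdot) \le I(\cdot;\cdot|\cdot)$ stated just after Property~\ref{lem:BP}, applied term-by-term: $UI(S;Y\backslash Z') \le I(S;Y|Z')$ and $UI(SY;Z'\backslash Z) \le I(SY;Z'|Z)$. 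Combining these gives
\begin{equation*}
UI(S;Y\backslash Z) \le I(S;Y|Z') + I(SY;Z'|Z).
\end{equation*}

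Finally, since the left-hand side does not depend on $Z'$ and the inequality holds for every choice of kernel $P_{Z'|SYZ}$, I would take the minimum on the right over all such kernels (with the cardinality bound $|\mathcal{Z}'|\le |\Scal||\Ycal||\Zcal|$ as in~\eqref{eq:B1upperbound}) to obtain $UI(S;Y\backslash Z) \le B_1(S;Y|Z)$.

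There is no real obstacle here: Corollary~\ref{corr:UISKdecomposition} was set up precisely to mirror the structure of the $B_1$ bound~\eqref{eq:B1upperbound}, and the rest is the elementary bound $UI \le I$. The only thing to double-check is that the minimizing $Z'$ in the definition of $B_1$ lives in the same class of kernels over which we are allowed to apply the triangle inequality; since Corollary~\ref{corr:UISKdecomposition} is stated for arbitrary joint distributions of $(S,Y,Z,Z')$ extending $P_{SYZ}$, this is immediate.
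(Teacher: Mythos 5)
Your proof is correct and takes essentially the same route as the paper: the paper states Proposition~\ref{prop:UIB1} as an immediate consequence of Corollary~\ref{corr:UISKdecomposition}, and the implicit chain (spelled out in the proof of Proposition~\ref{prop:newUIbounds} via $UI\le B_{gUI}\le B_1$) is exactly your argument, namely apply the corollary, bound each $UI$ term by the corresponding conditional mutual information, and minimize over $P_{Z'|SYZ}$. Your closing remark about the cardinality bound is also the right thing to check and is handled correctly.
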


From Theorem~\ref{thm:uppbound} and Proposition~\ref{prop:UIB1}, we have the following chain of inequalities relating the bounds on the two-way rate.
\begin{theorem}\label{thm:fullchain}
  $
	\SK{S}{Y}{Z} \le UI(S;Y\backslash Z) \le B_1(S;Y|Z)
    \le I(S;Y\!\!\downarrow\downarrow \!Z) \le I(S;Y\!\!\downarrow \!Z)\le I(S;Y|Z).
  $
\end{theorem}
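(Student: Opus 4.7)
The theorem asserts a chain of five inequalities, essentially all of which have been established elsewhere in the paper or the literature, so my plan is to assemble the chain piece by piece rather than run a fresh argument.

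First I would dispose of the two leftmost inequalities, which are the genuinely new content here. The bound $\SK{S}{Y}{Z} \le UI(S;Y\backslash Z)$ is precisely the statement of Theorem~\ref{thm:uppbound} (proved in~\cite{UIdefAllerton} via protocol monotone arguments using Properties~\ref{lem:LOAliceBob}--\ref{thm:AC}), so I would simply invoke it. The bound $UI(S;Y\backslash Z) \le B_1(S;Y|Z)$ is Proposition~\ref{prop:UIB1}, which in turn is an immediate consequence of Corollary~\ref{corr:UISKdecomposition}: given any feasible $Z'$ in the definition~\eqref{eq:B1upperbound} of $B_1$, the corollary yields $UI(S;Y\backslash Z) \le UI(S;Y\backslash Z') + UI(SY;Z'\backslash Z)$, and each summand is upper bounded by the corresponding mutual/conditional mutual information ($UI(S;Y\backslash Z')\le I(S;Y|Z')$ and $UI(SY;Z'\backslash Z)\le I(SY;Z'|Z)$ by the trivial bounds). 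Taking the infimum over $Z'$ yields the claim.

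The three remaining inequalities $B_1(S;Y|Z) \le I(S;Y\!\!\downarrow\downarrow\! Z) \le I(S;Y\!\!\downarrow\! Z) \le I(S;Y|Z)$ are already recorded as the known chain~\eqref{eq:sandwichboundSK}, so strictly speaking I would just cite it. If I wanted a brief self-contained verification I would note: (i) $I(S;Y\!\!\downarrow\! Z)\le I(S;Y|Z)$ by choosing $Z'=Z$ in the definition of the intrinsic information; (ii) $I(S;Y\!\!\downarrow\downarrow\! Z)\le I(S;Y\!\!\downarrow\! Z)$ by choosing $U$ constant; (iii) for $B_1\le I(S;Y\!\!\downarrow\downarrow\! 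Z)$, given any $U$ and any $W$ with $W$--$ZU$--$SY$ achieving $I(S;Y\!\!\downarrow\! ZU)$, use $Z':=W$ in the definition of $B_1$ and observe that $I(SY;W|Z)\le I(SY;WU|Z)=I(SY;U|Z)\le H(U)$ since $I(SY;W|ZU)=0$ by the Markov property.

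There is no real obstacle in the proof: the content sits in Theorem~\ref{thm:uppbound} and Proposition~\ref{prop:UIB1}, both of which are proved elsewhere. The only mild subtlety is ensuring that the standard relation $B_1 \le I(S;Y\!\!\downarrow\downarrow\! Z)$ from~\eqref{eq:sandwichboundSK} can be justified quickly, and the Markov reduction above is the trick that makes it work.
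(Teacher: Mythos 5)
Your proposal matches the paper's own treatment: Theorem~\ref{thm:fullchain} is assembled there exactly as you describe, by citing Theorem~\ref{thm:uppbound} for the first inequality, Proposition~\ref{prop:UIB1} for the second, and the known chain~\eqref{eq:sandwichboundSK} for the rest, with no further argument given. Your optional self-contained verification of the tail $B_1 \le I(S;Y\!\!\downarrow\downarrow\!Z) \le I(S;Y\!\!\downarrow\!Z) \le I(S;Y|Z)$ (via the $Z':=W$ substitution and the Markov identity $I(SY;W|ZU)=0$) is a correct bonus that the paper itself omits.
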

Given~$(S,Y,Z)\sim P$, let 
\begin{align}\label{eq:Qstar}
Q^{*} \in \argmin_{Q\in\Delta_{P(S,Y,Z)}} I_{Q}(S;Y|Z).
\end{align}
The distribution $Q^{\ast}$ is called a \emph{minimum synergy} distribution, as \mbox{$CI(S;Y,Z) = 0$} if and only if $P=Q^{\ast}$.
By definition, $I_{Q^{\ast}}(S;Y|Z)=UI(S;Y\backslash Z)$.
An immediate consequence of Theorem~\ref{thm:fullchain} is the following: 
Choosing~$P=Q^{\ast}$, all known upper bounds on the two-way rate collapse to the~$UI$ and the conditional mutual information, respectively.

Examples are known which show that~$UI$ is not an upper bound on~$S_{\leftrightarrow}$ (see e.g., \cite[Example 41]{UIdefAllerton}, \cite[Appendix]{goharicomments}).
We make the following conjecture.
\begin{conjecture}\label{conj:UISK}
  $UI(S;Y\backslash Z) \le \SKK{S}{Y}{Z}$.
\end{conjecture}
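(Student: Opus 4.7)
The plan is to prove Conjecture~\ref{conj:UISK} by constructing, for every $\epsilon>0$ and all sufficiently large~$n$, a two-way public-communication protocol achieving a rate of at least $UI(S;Y\backslash Z)-\epsilon$ on i.i.d.\ copies of $(S,Y,Z)\sim P$. The natural anchor is a minimum-synergy distribution $Q^{\ast}\in\Delta_{P(S,Y,Z)}$ of~\eqref{eq:Qstar}, which satisfies $I_{Q^{\ast}}(S;Y|Z)=UI(S;Y\backslash Z)$. Since $Q^{\ast}$ and $P$ share the $(S,Y)$- and $(S,Z)$-marginals, Alice's and Bob's single-letter statistics are identical under the two laws; only the $(Y,Z)$-marginal can differ, and this is precisely what makes $\SKK{S}{Y}{Z}$ depend on the full joint rather than only on $P_{SY},P_{SZ}$. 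Hence any protocol whose operational actions depend only on $S^n$ and on the $(S,Y)$-statistics executes identically under $P$ and $Q^{\ast}$, while the security analysis, which concerns Eve's true observations, must ultimately be carried out under~$P$.

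An informative special case serves as a guide. Whenever $\Delta_{P(S,Y,Z)}$ contains a distribution $Q$ under which $S\to Y\to Z$ is a Markov chain, one has
\begin{align*}
UI(S;Y\backslash Z)\le I_Q(S;Y|Z)=I(S;Y)-I(S;Z),
\end{align*}
and $I(S;Y)-I(S;Z)$ is already a classical lower bound on $\SK{S}{Y}{Z}\le\SKK{S}{Y}{Z}$~\cite{csiszar1978broadcast,maurer1993}, so the conjecture follows. The hard case is distributions $P$ for which no such Markov $Q$ lies in $\Delta_P$; here one has to access $UI$ via a genuinely two-way procedure.

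For the general case, I would design the protocol in two stages. First, an \emph{advantage-distillation preamble}: Alice, and possibly Bob, publicly broadcast a sublinear-rate auxiliary $W$ chosen so that, conditioned on $W$, the residual source effectively admits a minimum-synergy distribution with a useful Markov structure of the kind above. Property~\ref{lem:UISKlock} ensures that the broadcast costs at most $H(W)/n\to 0$ of $UI$. Second, apply the Ahlswede--Csisz\'ar construction of Theorem~\ref{thm:skaRate} to the post-preamble source; combined with Property~\ref{lem:AD} (additivity on tensor products) and Property~\ref{thm:AC} (asymptotic continuity), this would yield a two-way achievable rate of $UI(S;Y\backslash Z)-\epsilon$ in the limit. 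The principal obstacle I anticipate is exhibiting such a preamble in general: Theorem~\ref{thm:uppbound} only gives $\SK{S}{Y}{Z}\le UI(S;Y\backslash Z)$, with possible strict inequality, and the conjecture demands that two-way preprocessing always closes this gap. The synergy-free structure of $Q^{\ast}$ should supply the correct decomposition of the $(S;Y|Z)$-correlation into a ``shared-with-Eve'' piece and a ``purely Alice--Bob'' piece of size exactly~$UI$; the delicate step is verifying security when Eve is distributed according to the true law $P$ rather than $Q^{\ast}$, and I would attack this translation between $P$ and $Q^{\ast}$ by combining Property~\ref{lem:LOEve} with the triangle inequality of Proposition~\ref{prop:triangleUI}.
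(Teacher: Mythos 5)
This statement is, in the paper, explicitly a \emph{conjecture}: the authors give only heuristic reasons for believing it (the $UI$ depends just on the pairwise marginals $P_{SY},P_{SZ}$ whereas $\SKK{S}{Y}{Z}$ depends on the full joint, and the $UI$ is asymmetric under permuting $S,Y$ while $\SKK{S}{Y}{Z}$ is symmetric), and no proof is offered. There is therefore no ``paper proof'' to compare against, and your submission --- which you yourself flag as a plan with a ``principal obstacle'' left open --- does not supply one either.

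On the substance: the special case you treat is correct. If some $Q\in\Delta_{P(S,Y,Z)}$ makes $S$--$Y$--$Z$ a Markov chain, then $UI(S;Y\backslash Z)\le I_{Q}(S;Y|Z)=I(S;Y)-I(S;Z)\le \SK{S}{Y}{Z}\le \SKK{S}{Y}{Z}$, using that the pairwise marginals are fixed on $\Delta_{P}$. But that is the easy regime, already covered by the trivial lower bound in~\eqref{eq:trivialboundSK}. For the general case, you never construct the preamble $W$, and the ``translation'' from $Q^{\ast}$ to $P$ is not merely delicate but unaddressed: Property~\ref{lem:LOEve} and Proposition~\ref{prop:triangleUI} are monotonicity and triangle facts about the \emph{function} $UI$, not about the operational rate $\SKK{S}{Y}{Z}$, so they give no control over how a protocol designed against the fictitious adversary law $Q^{\ast}$ performs against Eve's true side information under $P$. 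Likewise, Properties~\ref{lem:AD} and~\ref{thm:AC} are analytic regularity properties of $UI$; they do not by themselves yield achievability of any key rate. The crux --- that $\SKK{S}{Y}{Z}$ may vary across $\Delta_{P}$ while $UI$ is constant on it, so that one must lower-bound the rate at the actual $P$ rather than at the most favorable $Q\in\Delta_{P}$ --- is exactly what the paper's own heuristic discussion identifies as the open question, and your sketch does not advance beyond it.
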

Let us briefly mention why we believe that this conjecture is true. Firstly, while the function~$UI(S;Y\backslash Z)$ depends only on the marginals of the pairs~$(S,Y)$ and~$(S,Z)$, the same is not true for~$\SKK{S}{Y}{Z}$ which depends on the full joint distribution of~$(S,Y,Z)$. Secondly, unlike~$\SKK{S}{Y}{Z}$ which is symmetrical in~$S$ and~$Y$, the function~$UI$ is asymmetric in all three variables. This asymmetry is manifest, for instance, when we note that~$UI(S;Y\backslash Z)$ is not monotone under public communication by Bob. 

\begin{remark}[Sandwich bound on~$\SKK{S}{Y}{Z}$]\label{rem:UI-2SK}
	 If Conjecture~\ref{conj:UISK} is true, then 
	 \begin{align}\label{eq:SKKsandwichbound}
	 UI(S;Y\backslash Z)= I_{Q^{*}}(S;Y|Z)\le \SKK{S}{Y}{Z} \le I_P(S;Y|Z).
	 \end{align}
	 \eqref{eq:SKKsandwichbound} implies that 
     the set of all $Q^{*}$ as in~\eqref{eq:Qstar} is a set of distributions for which the $UI$ equals the two-way rate. 
\end{remark}
A related work~\cite{chitambarSK} gives necessary conditions for when the two-way rate equals the conditional mutual information.

\begin{definition} \label{def:newUIbounds}
	Define the following functions on~$\mathbb{P}_{\Scal\times\Ycal\times\Zcal}$.
	\begin{align*}
      B_{sUI}(S;Y|Z)&:=\inf_{P_{Z'|SYZ}} UI(S;Y\backslash Z')+UI(SY;Z' \backslash Z). 
      \\
	B_{gUI}(S;Y|Z)&:=\inf_{P_{Z'|SYZ}} I(S;Y|Z')+UI(SY;Z' \backslash Z). 
	\end{align*}
\end{definition}
As the following proposition shows, $B_{gUI}(S;Y|Z)$ is a new upper bound on the two-way rate which is juxtaposed between the two best known bounds~$B_2$ and~$B_1$.
\begin{proposition} \label{prop:newUIbounds}	
	\begin{align}
	B_{sUI}(S;Y|Z) & = UI(S;Y\backslash Z) \le B_{gUI}(S;Y|Z)\label{eq:sgUIbound}\\
	B_2(S;Y|Z) &\le B_{gUI}(S;Y|Z) \le B_1(S;Y|Z)\label{eq:newUIbound}
	\end{align}
\end{proposition}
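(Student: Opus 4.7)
The plan is to prove all four inequalities by direct reductions to the triangle inequality (Corollary~\ref{corr:UISKdecomposition}), the trivial bounds on $UI$, and the one-way bound from Theorem~\ref{thm:uppbound}; each step is a one-line observation once we notice which $Z'$ to plug in or which inequality to apply pointwise under the infimum.

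First I would establish $B_{sUI}(S;Y|Z) = UI(S;Y\backslash Z)$. The inequality $B_{sUI}(S;Y|Z) \le UI(S;Y\backslash Z)$ follows by choosing $Z' = Z$ in the defining infimum: then $UI(SY;Z'\backslash Z) = UI(SY;Z\backslash Z) = 0$ by Property~\ref{lem:BP}, so the objective reduces to $UI(S;Y\backslash Z)$. The reverse inequality $UI(S;Y\backslash Z) \le B_{sUI}(S;Y|Z)$ is exactly Corollary~\ref{corr:UISKdecomposition} applied to every admissible $Z'$ and then infimized. The inequality $UI(S;Y\backslash Z) \le B_{gUI}(S;Y|Z)$ in~\eqref{eq:sgUIbound} then follows because $UI(S;Y\backslash Z') \le I(S;Y|Z')$ is a trivial bound, so $B_{sUI}(S;Y|Z) \le B_{gUI}(S;Y|Z)$ term-by-term under the infimum.

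Next I would prove the two-sided sandwich~\eqref{eq:newUIbound}. For $B_2(S;Y|Z) \le B_{gUI}(S;Y|Z)$, apply Theorem~\ref{thm:uppbound} with the roles of $(S,Y,Z)$ replaced by $(SY, Z', Z)$, which gives $\SK{SY}{Z'}{Z} \le UI(SY;Z'\backslash Z)$ for every choice of $P_{Z'|SYZ}$. Adding $I(S;Y|Z')$ on both sides and taking the infimum over $P_{Z'|SYZ}$ yields $B_2 \le B_{gUI}$. For $B_{gUI}(S;Y|Z) \le B_1(S;Y|Z)$, apply the trivial upper bound $UI(SY;Z'\backslash Z) \le I(SY;Z'|Z)$ term-by-term under the infimum.

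None of these steps looks difficult; the only thing to double-check is that the cardinality restriction on $Z'$ in the definition of $B_1$ does not cause issues when we compare with $B_{gUI}$ (whose infimum is a priori unconstrained in $|\mathcal{Z}'|$), but this only matters for the direction $B_{gUI} \le B_1$, where restricting the domain can only increase the infimum, so the inequality is preserved. The main conceptual point of the proof is simply that $B_{sUI}$ collapses to $UI$ thanks to the triangle inequality already established in Corollary~\ref{corr:UISKdecomposition}, and that the new bound $B_{gUI}$ interpolates between $B_2$ and $B_1$ precisely because $UI$ interpolates between $S_{\rightarrow}$ and conditional mutual information.
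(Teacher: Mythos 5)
Your proposal is correct and takes essentially the same route as the paper. The only cosmetic difference is in the direction $B_{sUI}\le UI(S;Y\backslash Z)$: you plug in $Z'=Z$ directly (and kill $UI(SY;Z\backslash Z)$ via the Blackwell property), whereas the paper restricts the infimum to $Z'$ with $SY$--$Z$--$Z'$ Markov and invokes Property~\ref{lem:LOEve} to identify the restricted infimum with $UI(S;Y\backslash Z)$; both give the same bound, and yours is the slightly more economical version. Your remark on the cardinality constraint in $B_1$ is a useful sanity check and is resolved exactly as you say: restricting the feasible set for $Z'$ can only increase an infimum, so the comparison $B_{gUI}\le B_1$ is unaffected.
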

\begin{proof} 
	The left equality in~\eqref{eq:sgUIbound} follows from Corollary~\ref{corr:UISKdecomposition} and
	\begin{align*} 
	B_{sUI}(S;Y|Z)=\inf_{P_{Z'|SYZ}} UI(S;Y\backslash Z')+UI(SY;Z' \backslash Z)
	\le \inf_{P_{Z'|Z}:SY-Z-Z'} UI(S;Y\backslash Z') = UI(S;Y\backslash Z),
	\end{align*}
	where the last equality uses Property~\ref{lem:LOEve}.
	The right inequality in~\eqref{eq:sgUIbound} follows from Corollary~\ref{corr:UISKdecomposition} and from $UI(S;Y\backslash Z')\le I(S;Y|Z')$.
	
	Statement~\eqref{eq:newUIbound} follows from Theorem~\ref{thm:uppbound} by noting that $\SK{SY}{Z'}{Z}\le UI(SY;Z'\backslash Z) \le I(SY;Z'|Z)$.
\end{proof}

\section{Conclusion} 
We showed a triangle inequality for the unique information which implies that the $UI$ is never greater than one of the best known upper bounds on the two-way secret key rate. 
We conjecture that the~$UI$ is indeed a lower bound on the two-way rate.
Assuming that the conjecture is true, we characterized a set of distributions for which the two-way rate equals the conditional mutual information and the~$UI$.
This provides an operational characterization of the~$UI$.

\appendix
\label{app:proofs_UISK}
\begin{proof}[Proof of Property~\ref{lem:LOEve}]
	Let $(S,Y,Z)\sim P$ and $(S,Y,Z,Z')\sim P'$.  By definition, $P$ is a marginal of~$P'$. Let~$Q\in\Delta_{P(S,Y,Z)}$, and let
	$Q'(s,y,z,z') = Q(s,y,z) P'(z'|s,z)$ if~$P(s,z)>0$ and~$Q'(s,y,z,z')=0$ otherwise. Then~$Q'\in\Delta_{P'(S,Y,ZZ')}$. Moreover, $Q$ is the $(S,Y,Z)$-marginal of~$Q'$, and $Y$--$SZ$--$Z'$ is a Markov chain w.r.t.~$Q'$ by construction. Therefore,
	\begin{align*} 
	I_{Q'}(S;Y|ZZ') &= I_{Q'}(SZ';Y|Z) - I_{Q'}(Z';Y|Z)
	\\ 
	&\le I_{Q'}(SZ';Y|Z)
	= I_{Q'}(S;Y|Z) + I_{Q'}(Z';Y|SZ)
               = I_{Q'}(S;Y|Z) = I_{Q}(S;Y|Z).
	\end{align*}
	Taking the minimum over~$Q\in\Delta_{P(S,Y,Z)}$, we conclude that
	\begin{align}\label{eq:oldEveLO}
	UI(S;Y \backslash Z,Z') \leq UI(S;Y \backslash Z).
	\end{align}
	
	If $SY$--$Z$--$Z'$ is a Markov chain by assumption, 
	then
	\begin{align}\label{eq:newEveLO}  
              UI(S;Y\backslash Z,Z') & = \min_{Q'\in\Delta_{P'(S,Y,ZZ')}} I_{Q'}(S;Y|ZZ')
          \notag\\ &
          \stackrel{(a)}{=} \min_{Q'\in\Delta_{P'(S,Y,ZZ')}} I_{Q'}(S;Y|Z) 
          -I_{Q'}(S;Z'|Z)+I_{Q'}(S;Z'|ZY)\notag\\
	&\stackrel{(b)}{\ge}  \min_{Q'\in\Delta_{P'(S,Y,ZZ')}} I_{Q'}(S;Y|Z) \notag\\ 
	&\stackrel{(c)}{\ge} \min_{Q\in\Delta_{P(S,Y,Z)}} I_{Q}(S;Y|Z) = UI(S;Y\backslash Z), 
	\end{align}
	where~(a) follows from the chain rule of mutual information,
	(b) follows since $SY$--$Z$--$Z'$ w.r.t.~$P'$ implies $I_{Q'}(S;Z'|Z)=0$, and
	(c) follows since~$Q$ is the~$(S,Y,Z)$-marginal of~$Q'$ and~$Q'\in\Delta_{P'}$ implies~$Q\in\Delta_{P}$.
	\eqref{eq:oldEveLO} and~\eqref{eq:newEveLO} together imply $UI(S;Y\backslash Z)= UI(S;Y\backslash Z,Z')$. 
	
	Since \eqref{eq:oldEveLO} holds for any~$(S,Y,Z,Z')$, exchanging~$Z'$ and~$Z$ in~\eqref{eq:oldEveLO} gives $UI(S;Y\backslash Z)= UI(S;Y\backslash Z,Z')\le UI(S;Y\backslash Z')$ which completes the proof.
\end{proof}

\begin{proof} [Proof of Property~\ref{lem:UISKlock}]
	Let~$(S,Y,Z,U)\sim \widetilde{P}$ and let~$P$ be the $(S,Y,Z)$-marginal of~$\widetilde{P}$. Let
	\begin{align*} 
	\widetilde{Q}^{\ast} &\in \argmin_{\widetilde{Q}\in\Delta_{\widetilde{P}(S,Y,ZU)}} I_{\widetilde{Q}}(S;Y|ZU), \text{ and } 
	Q^{\ast} \in \argmin_{Q\in\Delta_{P(S,Y,Z)}} I_Q(S;Y|Z).
	\end{align*}
	Then
	\begin{align*} 
              UI(S;Y\backslash ZU) = I_{\widetilde{Q}^{\ast}}(S;Y|ZU) 
	\ge I_{\widetilde{Q}^{\ast}}(S;Y|Z)-H(U)  \ge I_{Q^{\ast}}(S;Y|Z)-H(U)
	=UI(S;Y\backslash Z)-H(U), 
	\end{align*}
	where in the third step we have used the fact that for any $\widetilde{Q}\in \Delta_{\widetilde{P}}$, the $(S,Y,Z)$-marginal of $\widetilde{Q}$ lies in~$\Delta_{P}$.
\end{proof}

\vspace{1cm}
\bibliographystyle{IEEEtran}
\bibliography{IEEEabrv,general}

\begin{thebibliography}{10}
\providecommand{\url}[1]{#1}
\csname url@samestyle\endcsname
\providecommand{\newblock}{\relax}
\providecommand{\bibinfo}[2]{#2}
\providecommand{\BIBentrySTDinterwordspacing}{\spaceskip=0pt\relax}
\providecommand{\BIBentryALTinterwordstretchfactor}{4}
\providecommand{\BIBentryALTinterwordspacing}{\spaceskip=\fontdimen2\font plus
\BIBentryALTinterwordstretchfactor\fontdimen3\font minus
  \fontdimen4\font\relax}
\providecommand{\BIBforeignlanguage}[2]{{%
\expandafter\ifx\csname l@#1\endcsname\relax
\typeout{** WARNING: IEEEtran.bst: No hyphenation pattern has been}%
\typeout{** loaded for the language `#1'. Using the pattern for}%
\typeout{** the default language instead.}%
\else
\language=\csname l@#1\endcsname
\fi
#2}}
\providecommand{\BIBdecl}{\relax}
\BIBdecl

\bibitem{maurer1993}
U.~M. Maurer, ``Secret key agreement by public discussion from common
  information,'' \emph{IEEE Transactions on Information Theory}, vol.~39,
  no.~3, pp. 733--742, 1993.

\bibitem{maurerintrinsic}
U.~M. Maurer and S.~Wolf, ``Unconditionally secure key agreement and the
  intrinsic conditional information,'' \emph{IEEE Transactions on Information
  Theory}, vol.~45, no.~2, pp. 499--514, 1999.

\bibitem{maurerstrong}
------, ``From weak to strong information-theoretic key agreement,'' in
  \emph{Proc. IEEE ISIT}, 2000, p.~18.

\bibitem{ckbook}
I.~Csisz\'{a}r and J.~K{\"o}rner, \emph{Information theory: coding theorems for
  discrete memoryless systems}.\hskip 1em plus 0.5em minus 0.4em\relax
  Cambridge University Press, 2011.

\bibitem{csiszar1978broadcast}
I.~Csisz{\'a}r and J.~K{\"o}rner, ``Broadcast channels with confidential
  messages,'' \emph{IEEE Transactions on Information Theory}, vol.~24, no.~3,
  pp. 339--348, 1978.

\bibitem{ahlswede1993}
R.~Ahlswede and I.~Csisz{\'a}r, ``Common randomness in information theory and
  cryptography. {I}. {S}ecret sharing,'' \emph{IEEE Transactions on Information
  Theory}, vol.~39, no.~4, pp. 1121--1132, 1993.

\bibitem{gohari1}
A.~A. Gohari and V.~Anantharam, ``Information-theoretic key agreement of
  multiple terminals: Part {I},'' \emph{IEEE Transactions on Information
  Theory}, vol.~56, no.~8, pp. 3973--3996, 2010.

\bibitem{goharicomments}
------, ``Comments on ``{I}nformation-theoretic key agreement of multiple
  terminals--{P}art {I}'','' \emph{IEEE Transactions on Information Theory},
  vol.~63, no.~8, pp. 5440--5442, 2017.

\bibitem{chitambarSK}
E.~Chitambar, B.~Fortescue, and M.-H. Hsieh, ``Distributions attaining secret
  key at a rate of the conditional mutual information,'' in \emph{Annual
  Cryptology Conference}.\hskip 1em plus 0.5em minus 0.4em\relax Springer,
  2015, pp. 443--462.

\bibitem{gohari3}
K.~Keykhosravi, M.~Mahzoon, A.~A. Gohari, and M.~R. Aref, ``From source model
  to quantum key distillation: An improved upper bound,'' in \emph{Proc. IEEE
  IWCIT}.\hskip 1em plus 0.5em minus 0.4em\relax IEEE, 2014, pp. 1--6.

\bibitem{KoernerMarton75:Comparison_of_noisy_channels}
J.~K{\"o}rner and K.~Marton, ``Comparison of two noisy channels,'' in
  \emph{Topics in information theory}.\hskip 1em plus 0.5em minus 0.4em\relax
  Keszthely (Hungary): Colloquia Mathematica Societatis Janos Bolyai, 1975,
  vol.~16, pp. 411--423.

\bibitem{gohariachieving}
A.~A. Gohari, O.~G{\"u}nl{\"u}, and G.~Kramer, ``Coding for positive rate in
  the source model key agreement problem,'' \emph{arXiv preprint
  arXiv:1709.05174}, 2018.

\bibitem{e16042161}
N.~Bertschinger, J.~Rauh, E.~Olbrich, J.~Jost, and N.~Ay, ``Quantifying unique
  information,'' \emph{Entropy}, vol.~16, no.~4, pp. 2161--2183, 2014.

\bibitem{WilliamsBeer}
P.~Williams and R.~Beer, ``Nonnegative decomposition of multivariate
  information,'' \emph{arXiv:1004.2515v1}, 2010.

\bibitem{UIdefAllerton}
P.~K. Banerjee, E.~Olbrich, J.~Jost, and J.~Rauh, ``Unique informations and
  deficiencies,'' \emph{arXiv preprint arXiv:1807.05103}, 2018, {A}llerton 2018
  (to appear).

\bibitem{DOT2017bivariate}
A.~Makkeh, D.~O. Theis, and R.~Vicente, ``Bivariate partial information
  decomposition: The optimization perspective,'' \emph{Entropy}, vol.~19,
  no.~10, p. 530, 2017.

\bibitem{CUIfullver}
P.~K. Banerjee, J.~Rauh, and G.~Mont{\'u}far, ``Computing the unique
  information,'' in \emph{Proc. IEEE ISIT}.\hskip 1em plus 0.5em minus
  0.4em\relax IEEE, 2018, pp. 141--145.

\bibitem{Blackwell1953}
D.~Blackwell, ``Equivalent comparisons of experiments,'' \emph{The Annals of
  Mathematical Statistics}, vol.~24, no.~2, pp. 265--272, 1953.

\bibitem{BlackwellISIT}
N.~Bertschinger and J.~Rauh, ``The {B}lackwell relation defines no lattice,''
  in \emph{Proc. IEEE ISIT}.\hskip 1em plus 0.5em minus 0.4em\relax IEEE, 2014,
  pp. 2479--2483.

\bibitem{ISIT_RBOJ14}
J.~Rauh, N.~Bertschinger, E.~Olbrich, and J.~Jost, ``Reconsidering unique
  information: Towards a multivariate information decomposition,'' in
  \emph{Proc. IEEE ISIT}, 2014, pp. 2232--2236.

\bibitem{maurerunbreakable}
U.~Maurer, R.~Renner, and S.~Wolf, ``Unbreakable keys from random noise,'' in
  \emph{Security with Noisy Data}.\hskip 1em plus 0.5em minus 0.4em\relax
  Springer, 2007, pp. 21--44.

\bibitem{christandl}
M.~Christandl, A.~Ekert, M.~Horodecki, P.~Horodecki, J.~Oppenheim, and
  R.~Renner, ``Unifying classical and quantum key distillation,'' in
  \emph{Theory of Cryptography Conference}.\hskip 1em plus 0.5em minus
  0.4em\relax Springer, 2007, pp. 456--478.

\bibitem{christandlICMI}
M.~Christandl, R.~Renner, and S.~Wolf, ``A property of the intrinsic mutual
  information,'' in \emph{Proc. IEEE ISIT}, 2003, pp. 258--258.

\bibitem{RennerW03}
R.~Renner and S.~Wolf, ``New bounds in secret-key agreement: The gap between
  formation and secrecy extraction,'' in \emph{Advances in Cryptology -
  {EUROCRYPT} 2003, Warsaw, Poland}, 2003, pp. 562--577.

\end{thebibliography}

\end{document}